\newtheorem{thm}{Theorem}
\newtheorem{lmm}{Lemma}
\newcommand{\Z}{\mathbb{Z}}
\begin{document}
\title{Benaloh's Dense Probabilistic Encryption Revisited\thanks{This work was supported by ANR SeSur SCALP, SFINCS, AVOTE.}}
\author{
Laurent Fousse\inst{1}
\and
Pascal Lafourcade\inst{2}
\and
Mohamed Alnuaimi\inst{3}
}

\institute{
Université Grenoble 1\\
CNRS \\
Laboratoire Jean Kuntzmann\\
France\\
\email{Laurent.Fousse@imag.fr}
\and
Université Grenoble 1\\
CNRS \\
Verimag \\
France\\
\email{Pascal.Lafourcade@imag.fr}
\and
ENSIMAG\\
France\\
\email{alnuaimi.mohd@gmail.com}
}

\date{}
\maketitle

\pagestyle{plain}

\begin{abstract}
  In 1994, Josh Benaloh proposed a probabilistic homomorphic
  encryption scheme,
enhancing the poor expansion
  factor provided by Goldwasser and Micali's scheme.
  Since then, numerous papers have taken
  advantage of Benaloh's homomorphic encryption function, including
  voting schemes, computing multi-party trust privately,
  non-interactive verifiable secret sharing, online poker...  In this
  paper we show that the original description of the scheme is
  incorrect, possibly resulting in ambiguous decryption of
  ciphertexts. We give a corrected description of the scheme and provide
  a complete proof of correctness. We also compute the probability of
  failure of the original scheme. Finally we analyze several
  applications using Benaloh's encryption scheme.
  We show in each case the impact of a bad choice in the key generation
  phase of Benaloh's scheme. For instance in the application of e-voting
  protocol, it can inverse the result of an election, which is a non
  negligible consequence.
\keywords{homomorphic encryption, public-key encryption, Benaloh's
  scheme.}
\end{abstract}

\section{Introduction}

In the literature there are several homomorphic encryption schemes as
for instance schemes proposed by
Goldwasser-Micali~\cite{GoldwasserM82},
ElGamal~\cite{elgamal_proceedings_1985}, Benaloh~\cite{Benaloh94},
Naccache and Stern~\cite{NaccacheS98}, Okamoto and
Uchiyama~\cite{OU98}, Paillier~\cite{Paillier:1999:PKC} and its
generalization proposed by Damg\aa rd and
Jurik~\cite{damgaard_generalisationsimpli._2001}, Sander, Young and
Yung~\cite{SanderYY99}, Gaborit and Aguillar~\cite{MelchorGH10}. In
this paper we focus our attention on Benaloh's encryption
scheme. In~\cite{CFFG} a survey of existing homomorphic encryption
schemes is proposed for the non specialist.
In~\cite{akinwande_advances_2009} the author also proposes a
description and a complexity analysis of different existing
homomorphic encryption schemes. In~\cite{rappe_homomorphic_2004}, the
author considers homomorphic cryptosystems and their applications. In
all these papers authors mention existing homomorphic encryption
schemes and give descriptions of such schemes including Benaloh's
scheme.  Homomorphic encryption schemes have several 
applications. We only cite applications that are using Benaloh's
scheme as for example voting schemes~\cite{BT94,RuizV05,ben87},
computing multi-party trust privately~\cite{cramer01,PIR05,Trust10},
non-interactive verifiable secret sharing~\cite{ben87}, online
poker~\cite{Golle:2005}... 

Despite all these papers on applications and implementations realized
by all these specialists, we were surprised to discover that the
condition in the key generation of Benaloh's scheme can in some cases
lead to an ambiguous encryption. How is it possible that after all
these papers, results, protocols, even implementations and more than
fifteen years nobody noticed it?  In order to answer this question we
will explicitly express the failure probability of the original scheme
in Section~\ref{sec:proba}. How did we discover this problem?  We
wanted to perform a time comparison of the efficiency of some
well-known homomorphic encryptions.  We proposed a methodology for
testing their performance on large randomly generated data. So we
started to code some of the encryption and decryption functions of
homomorphic primitives.  Benaloh's was one of the first one that we
have tried. We were surprised to see that on some randomly generated
instances of Benaloh's parameters our decryption function was not able
to recover the correct plaintext. After verifying our code several
times according to the conditions given in the original paper we were
not able to understand why our code did not give the right
plaintext. So we investigated more and were able to generate several
counter-examples (one example of problematic parameters is given in
Section~\ref{sec:counterex}) and more interestingly we clearly
understood why and where the scheme failed. Indeed the bug is due to a
very small detail, hence we proposed a revisited version of Benaloh's
dense probabilistic encryption.

\paragraph{Contributions: } The first contribution is that the original
scheme proposed by Benaloh in~\cite{Benaloh94} does not give a unique
decryption for all ciphertexts. We exhibit a simple example in the
rest of the paper and characterize when this can happen and how to
produce such counter-examples.  Indeed the problem comes from the
condition in the generation of the public key. The condition is not
strong enough and allows to generate such keys that can for some
plaintexts generate ambiguous ciphertexts.

Hence our second contribution is a new condition for the key generation
which avoids such problem. We not only propose a new correct condition
but also give an equivalent practical condition that can be used for
implementations. We also compute the probability of failure of the
original scheme, in order to justify why nobody discovered the problem
before us.

Finally we describe some applications using explicitly Benaloh's
scheme. In each case we briefly explain how the application works on a
simple example. With these examples we clearly show that if our new
condition is not used then the wrong key generation can have important
consequences. In the case of the e-voting protocol it can change the
result of an election; for private multi-party trust computation it
can really impact the trust that somebody can have in someone.

\paragraph{Outline:} In Section~\ref{sec:original} we recall the
original Benaloh scheme.  In Section~\ref{sec:counterex} we give a
small example of parameters following the initial description and
where we have ambiguous decryption. Then in
Section~\ref{sec:correction} we give a corrected description of the
scheme, with a proof of correctness. The probability of choosing
incorrect parameters in the initial scheme is discussed in
Section~\ref{sec:proba}.  In Section~\ref{sec:rel} we discuss some
schemes related to Benaloh's scheme.  Finally before concluding in the
last section, we demonstrate using some applications that the problem
we discover can have serious consequences.

\section{Original Description of Benaloh's Scheme} \label{sec:original}

Benaloh's ``Dense Probabilistic Encryption''~\cite{Benaloh94}
describes an homomorphic encryption scheme with a significant
improvement in terms of expansion factor compared to
Goldwasser-Micali~\cite{GoldwasserM82}. For the same security
parameter (the size of the RSA modulus $n$), the ciphertext is in both
cases an integer $\bmod~n$, but the cleartext in Benaloh's scheme is
an integer $\bmod~r$ for some parameter $r$ depending on the key,
whereas the cleartext in Goldwasser-Micali is only a bit. When
computing the expansion factor for random keys, we found that it is
most of the times close to $1/2$ while it is $\lceil \log_2(n)\rceil$
for Goldwasser-Micali. We now recall the three steps of the original
scheme given in Benaloh's paper~\cite{Benaloh94}.

\paragraph{Key Generation}

The public and private key are generated as follows:
\begin{itemize}
\item Choose a block size $r$ and two large primes $p$ and $q$ such
  that :
\begin{itemize}
\item $r$ divides $(p-1)$.
\item $r$ and $(p-1)/r$ are relatively prime.
\item $r$ and $q-1$ are relatively prime.
\item $n = pq$.
\end{itemize}
\item Select $y \in (\mathbb{Z}_n)^* = \{ x \in \mathbb{Z}_n :
  \gcd(x,n)=1\}$ such that 
\begin{equation}
y^{\varphi/r} \neq 1 \bmod n
\label{buggy_y_condition}
\end{equation}
where  $\varphi$ denotes $ (p-1)(q-1)$.
\end{itemize}
The public key is $(y, r, n)$, and the private key is the two primes
$p$, $q$.

\paragraph{Encryption}

If $m$ is an element in $\mathbb{Z}_r$ and $u$ a random number in
$(\mathbb{Z}_n)^*$ then we compute the randomized encryption of $m$
using the following  formula:
\[ E_r(m) = \{y^m u^r \bmod n : u \in (\mathbb{Z}_n)^*\}.\]

\paragraph{Decryption} We first notice that for any $m$, $u$ we have: 
\[    (y^m u^r)^{(p-1)(q-1)/r} \equiv y^{m(p-1)(q-1)/r} u^{(p-1)(q-1)}
\equiv y^{m(p-1)(q-1)/r} \mod n.\]

Since $m < r$ and $y^{(p-1)(q-1)/r} \not \equiv 1 \mod n$, Benaloh
concludes that $m = 0$ if and only if $(y^m u^r)^{(p-1)(q-1)/r} \equiv
1 \mod n$.  So if $z = y^m u^r \mod n$ is an encryption of $m$, given
the secret key $p$, $q$ we can determine whether $m=0$. If $r$ is
small, we can decrypt $z$ by doing an exhaustive search of the
smallest non-negative integer $m$ such that $(y^{-m}z \mod n) \in
E_r(0)$.  By precomputing values and using the baby-step giant-step
algorithm it is possible to perform the decryption in time
$O(\sqrt{r})$. Finally if $r$ is smooth we can use classical
index-calculus techniques. More details about these optimization of
decryption are discussed in the original paper~\cite{Benaloh94}.

We remark that there is a balance to find between three parameters
in this cryptosystem:
\begin{itemize}
\item
  ease of decryption, which requires that $r$ is a product of small
  prime powers,
\item
  a small expansion factor, defined as the ratio between the size of
  the ciphertexts and the size of the cleartexts. Because $p$ and $q$
  have the same size and $r\mid p-1$, this expansion factor is at
  least $2$,
\item
  robustness of the private key, meaning that $n$ should be hard to
  factor. In the context of the P-1 factorization
  method~\cite{Pollard74}, a big smooth factor of $p-1$ is a definite
  weakness.  
\end{itemize}
We notice that the cryptosystem of Naccache-Stern~\cite{NaccacheS98},
similar to Benaloh's scheme, addresses this issue and by consequence do
not produce ambiguous encryption.

\section{A Small Counter-Example}
\label{sec:counterex}

We start by picking a secret key $n = pq = 241\times{}179 = 43139$, for which
we can pick $r=15$. Algorithm~\ref{alg:compr} may be used to
compute the maximal suitable value of the $r$ parameter if you start
by picking $p$ and $q$ at random, but a smaller and smoother value may
be used instead for an easier decryption.

\begin{algorithm}
\begin{algorithmic}
\STATE $r \gets p-1$
\WHILE{$\gcd(q-1, r) \neq 1$}
\STATE $r \gets r/\gcd(r, q-1)$
\ENDWHILE
\end{algorithmic}
\caption{Compute $r$ from $p$ and $q$.}\label{alg:compr}
\end{algorithm}

We verify that $r=15$ divides $p-1=240=16\times{}15$, $r$ and
$(p-1)/r=16$ are relatively prime, $r=15=3\times{}5$ and
$q-1=178=2\times{}89$ are coprime.  Assume we pick $y=27$, then
$\gcd(y,n)=1$ and $y^{(p-1)(q-1)/r} = 40097 \neq 1 \bmod n$ so
according to Benaloh's key generation procedure all the original
conditions are satisfied.

By definition, $z_1 = y^112^{r} = 24187$ is a valid encryption of
$m_1=1$, while $z_2 = y^64^{r} = 24187 = z_1$ is also a valid
encryption of $m_2=6$. In fact we can verify that with this choice of
$y$, the true cleartext space is now $\Z_5$ instead of $\Z_{15}$
(hence the ambiguity in decryption): first notice that in $\Z_p$,
$y^5=8=41^{15}$. This means that a valid encryption of $5$ is
also a valid encryption of $0$. For any message $m$, the set of
encryptions of $m$ is the same as the set of encryptions of $m+5$,
hence the collapse in message space size. The fact that the message
space size does not collapse further can be checked by brute force
with this small set of parameters.

For this specific choice of $p$ and $q$, there are
$\frac{r-1}{r}\varphi(n)=39872$ possible values of $y$ according to
the original paper, but $17088$ of them would lead to an ambiguity in
decryption (that's a ratio of $3/7$), sometimes decreasing the
cleartext space to $\Z_3$ or $\Z_5$.  Details are provided in
Section~\ref{sec:proba}.

\section{Corrected Version of Benaloh's Scheme}
\label{sec:correction}

Let $g$ be a generator of the group $\Z_p^*$, and since $y$ is coprime
with $n$, write $y = g^\alpha \bmod p$. We will now state in
Theorem~\ref{benaloh_fixed} our main contribution:
\begin{thm}
\label{benaloh_fixed}
The following properties are equivalent:
\begin{enumerate}[a)]
  \item $\alpha$ and $r$ are coprime;
    \label{alpha_r_coprime}
  \item decryption works unambiguously;
    \label{dec_works}
  \item For all prime factors $s$ of $r$, we have $y^{(\varphi/s)} \neq 1 \bmod n$.
    \label{y_good_order}
\end{enumerate}
\end{thm}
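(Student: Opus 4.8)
The plan is to route all three conditions through the discrete logarithm $\alpha$ of $y$ in $\Z_p^*$, showing that the entire behaviour of the scheme modulo $n$ is in fact governed by what happens modulo $p$. The first step is to reformulate condition (\ref{dec_works}) in group-theoretic terms. Since any $u \in (\Z_n)^*$ satisfies $u^\varphi \equiv 1 \bmod n$, the decryption exponentiation sends a ciphertext $z = y^m u^r$ to $z^{\varphi/r} \equiv y^{m\varphi/r} \bmod n$, independently of the randomizer $u$. Hence decryption recovers $m$ unambiguously exactly when the signatures $(y^{\varphi/r})^m \bmod n$ are pairwise distinct for $m \in \{0,1,\dots,r-1\}$. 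Because $(y^{\varphi/r})^r = y^\varphi \equiv 1 \bmod n$, the order of $y^{\varphi/r}$ divides $r$, so these $r$ signatures are distinct if and only if $y^{\varphi/r}$ has order exactly $r$ in $(\Z_n)^*$. Thus (\ref{dec_works}) is equivalent to $\operatorname{ord}_n(y^{\varphi/r}) = r$.

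Next I would reduce every computation modulo $n$ to one modulo $p$ via the CRT. The crucial remark is that for every divisor $d$ of $r$ (in particular $d=r$ and $d=s$ a prime factor of $r$) the exponent $\varphi/d = \tfrac{p-1}{d}(q-1)$ is a multiple of $q-1$, since $d \mid r \mid (p-1)$; hence $y^{\varphi/d} \equiv 1 \bmod q$ by Fermat. Consequently $y^{\varphi/d} \equiv 1 \bmod n$ if and only if $y^{\varphi/d} \equiv 1 \bmod p$, and in particular $\operatorname{ord}_n(y^{\varphi/r}) = \operatorname{ord}_p(y^{\varphi/r})$. This is where the hypothesis $\gcd(r,q-1)=1$ does its real work, and I expect this CRT collapse to be the main conceptual step: once it is in place the whole problem becomes one-dimensional, living entirely in $\Z_p^*$.

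Finally I would carry out the explicit order computation in $\Z_p^*$ using $y = g^\alpha$. Writing $p-1 = rk$ with $k=(p-1)/r$ and $\gcd(r,k)=1$, one gets $\varphi/r = k(q-1)$, so $y^{\varphi/r} = g^{\alpha k(q-1)} \bmod p$ has order $\frac{p-1}{\gcd(p-1,\,\alpha k(q-1))} = \frac{rk}{k\gcd(r,\alpha(q-1))} = \frac{r}{\gcd(r,\alpha)}$, the last equality using $\gcd(r,q-1)=1$. This order equals $r$ precisely when $\gcd(r,\alpha)=1$, which gives (\ref{dec_works}) $\iff$ (\ref{alpha_r_coprime}). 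For (\ref{y_good_order}), the same reduction shows that $y^{\varphi/s} \equiv 1 \bmod n$ iff $(p-1) \mid \alpha\varphi/s$, i.e. iff $s \mid \alpha(q-1)$, i.e. (again by $\gcd(s,q-1)=1$) iff $s \mid \alpha$; so requiring $y^{\varphi/s} \neq 1 \bmod n$ for every prime $s \mid r$ is exactly the statement that no prime factor of $r$ divides $\alpha$, which is (\ref{alpha_r_coprime}). The remaining work is routine gcd bookkeeping, and the only points demanding care are the identity $\gcd(kr,\,k\alpha(q-1)) = k\gcd(r,\alpha)$ and the repeated use of coprimality of $r$ (or $s$) with $q-1$ to drop the $(q-1)$ factor.
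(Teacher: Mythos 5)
Your proof is correct and takes a genuinely different route from the paper's. The paper proves four one-directional implications --- (a)$\Rightarrow$(b), (a)$\Rightarrow$(c), (c)$\Rightarrow$(a), and a step labelled (b)$\Rightarrow$(a) --- each by reducing a hypothetical identity modulo $p$ and arguing through the generator $g$, with an $s$-adic valuation argument for (a)$\Rightarrow$(c). You instead make one structural move (everything modulo $n$ collapses to modulo $p$, because $\varphi/d$ is a multiple of $q-1$ for every $d\mid r$, so Fermat trivializes the $q$-component) followed by one computation, $\operatorname{ord}_p\bigl(y^{\varphi/r}\bigr)=r/\gcd(r,\alpha)$, from which (a)$\Leftrightarrow$(b) and (a)$\Leftrightarrow$(c) both fall out as two-sided equivalences at once. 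This buys two things the paper's proof does not deliver: the order formula is precisely the quantitative refinement ($r'=r/\gcd(\alpha,r)$, the actual cleartext space size) that the paper must state as a separate lemma in Section~\ref{sec:proba}; and your argument genuinely establishes (b)$\Rightarrow$(a), whereas the paper's final step, read literally, assumes a ciphertext collision and derives $\gcd(\alpha,r)>1$ --- which is again (a)$\Rightarrow$(b) in contrapositive form, not (b)$\Rightarrow$(a).

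Two caveats. First, your reformulation of (b) as ``the $r$ signatures $y^{m\varphi/r}\bmod n$ are pairwise distinct'' adopts the operational reading of ``decryption works'': Benaloh's procedure (raise to $\varphi/r$, match against the signature table) returns the encrypted $m$. The paper's proof instead uses the semantic reading: distinct cleartexts have disjoint ciphertext sets $E_r(m)$. Distinct signatures trivially give disjoint ciphertext sets, but the converse --- a signature collision produces a genuine ciphertext collision --- needs one step you did not write: if $r\mid(m_1-m_2)\alpha$, then $y^{m_1-m_2}\equiv g^{(m_1-m_2)\alpha}$ is an $r$-th power residue modulo $p$; it is also an $r$-th power residue modulo $q$, because $\gcd(r,q-1)=1$ makes $r$-th powering a bijection of $(\Z_q)^*$; and CRT glues these into $y^{m_1-m_2}= w^r \bmod n$, whence $E_r(m_1)=E_r(m_2)$. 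This is a short patch with exactly the tools you already deploy, and notably the paper's own proof is missing the same step. Second, a minor misattribution: the CRT collapse itself needs only $r\mid p-1$ and Fermat; the hypothesis $\gcd(r,q-1)=1$ does its real work later, exactly where you drop the factor $q-1$ from the gcd's, and in the bijectivity argument just described.
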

Of course property (\ref{dec_works}) is what we expect of the scheme, while
(\ref{alpha_r_coprime}) is useful to analyze the proportion of invalid
$y$'s and (\ref{y_good_order}) is more efficient to verify in practice
than (\ref{alpha_r_coprime}), especially considering that in order to decrypt
efficiently the factorization of $r$ is assumed to be known.
\begin{proof}
We start by showing $(\ref{alpha_r_coprime}) \Rightarrow
(\ref{dec_works})$. Assume two messages $m_1$ and $m_2$ are encrypted
to the same element using nonce $u_1$ and $u_2$:
\[ y^{m_1}u_1^r = y^{m_2}u_2^r \bmod n. \]
Reducing $\bmod~p$ we get:
\[ g^{\alpha(m_1-m_2)} = (u_2/u_1)^r \bmod p \]
and using the fact that $g$ is a generator of $(\Z/p\Z)^*$, there exists some $\beta$ such that
\[ g^{\alpha(m_1-m_2)} = g^{\beta{}r} \bmod p \]
which in turns implies
\[ \alpha(m_1-m_2) = \beta{}r \bmod p-1.\]
By construction of $r$, we can further reduce $\bmod~r$ and get
\[ \alpha(m_1-m_2) = 0 \bmod r \]
and since $r$ and $\alpha$ are coprime, we can deduce $m_1=m_2 \bmod r$, which means that decryption
works unambiguously since the cleartexts are defined $\bmod~r$.

\vskip 5mm 
We now prove that $(\ref{alpha_r_coprime}) \Rightarrow
(\ref{y_good_order})$. Assume that there exists some prime factor $s$
of $r$ such that
 \[ y^{(\varphi/s)} = 1 \bmod n. \]
As above, by reducing $\bmod~p$ and using the generator $g$ of
$(\Z/p\Z)^*$ we get
\begin{equation}
\alpha\frac{\varphi}{s} = 0 \bmod p-1.
\label{eq:toto}
\end{equation}
Let $k = v_s(r)$ the $s$-valuation of $p-1$ and write $\alpha = \nu{}s +
\mu$ the Euclidean division of $\alpha$ by $s$. By construction we have
$v_s(\varphi) = k$. When reducing (\ref{eq:toto}) $\bmod~s^k$ we can remove all
factors of $\varphi$ that are coprime with $s$, so we get
\begin{eqnarray*}
\alpha{}s^{k-1} & = & 0 \bmod s^k \\
\mu{}s^{k-1} & = & 0 \bmod s^k \\
\mu{} & = & 0 \bmod s \\
\mu{} & = & 0
\end{eqnarray*}
and $\alpha$ and $r$ are not coprime.

\vskip 5mm

We now prove $(\ref{y_good_order}) \Rightarrow
(\ref{alpha_r_coprime})$. Assume $\alpha$ and $r$ are not coprime and
denote by $s$ some common prime factor.
Then
\begin{eqnarray*}
y^{(\varphi/s)} & = & g^{\alpha\varphi/s} \bmod p \\
& = & g^{(\alpha/s)\varphi} \bmod p = 1 \bmod p.
\end{eqnarray*}
And by construction of $r$, $s \nmid q-1$ so $y^{(\varphi/s)} = 1 \bmod q$.

\vskip 5mm
We now prove $(\ref{dec_works}) \Rightarrow
(\ref{alpha_r_coprime})$. Assume two different cleartexts $m_1 \neq
m_2 \bmod r$ are encrypted to the same ciphertext using nonces $u_1$
and $u_2$:
\[ y^{m_1}u_1^r = y^{m_2}u_2^r \bmod n. \]
As before, we focus on operations $\bmod~p$ and we get
\[ \alpha(m_1 - m_2) = 0 \bmod r.\]
If $\alpha$ were invertible $\bmod~r$, we would get an absurdity.
\end{proof}

Notice than in the  example of Section~\ref{sec:counterex} we have $y^{(p-1)(q-1)/3}=1 \bmod n$
so condition (\ref{y_good_order}) is not satisfied. We claimed that
the real ciphertext space is now $\Z_5$, and we give a precise
analysis of the cleartext space reduction at the end of
Section~\ref{sec:proba}.

\section{Probability of Failure of Benaloh's Scheme}
\label{sec:proba}

We now estimate the probability of failure in the scheme as originally
described. For this we need to count the numbers $y$ that satisfy
condition (\ref{buggy_y_condition}) and not property (\ref{y_good_order}) of
Theorem \ref{benaloh_fixed}. We call these values of $y$ ``faulty''.

\begin{lmm}
Condition (\ref{buggy_y_condition}) is equivalent to the statement: $r \nmid \alpha$.
\end{lmm}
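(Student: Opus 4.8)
The plan is to exploit the Chinese Remainder Theorem and to rephrase the condition $y^{\varphi/r} \equiv 1 \bmod n$ in terms of the discrete logarithm $\alpha$. Since $n = pq$ with $p$ and $q$ distinct primes, we have $y^{\varphi/r} \equiv 1 \bmod n$ if and only if $y^{\varphi/r} \equiv 1 \bmod p$ and $y^{\varphi/r} \equiv 1 \bmod q$ hold simultaneously. So I would first dispose of the component modulo $q$, and then reduce the whole question to a single divisibility statement modulo $p$.

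For the factor modulo $q$: writing $t = (p-1)/r$, which is an integer because $r \mid p-1$, we have $\varphi/r = (p-1)(q-1)/r = t(q-1)$, so that $y^{\varphi/r} = (y^{q-1})^{t} \equiv 1 \bmod q$ by Fermat's little theorem, using $\gcd(y,q)=1$ (which follows from $\gcd(y,n)=1$). Thus the component modulo $q$ is \emph{always} satisfied, and the condition $y^{\varphi/r} \equiv 1 \bmod n$ collapses to the single requirement $y^{\varphi/r} \equiv 1 \bmod p$.

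For the factor modulo $p$: since $g$ generates the cyclic group $(\Z/p\Z)^*$ of order $p-1$ and $y = g^{\alpha}$, we have $y^{\varphi/r} = g^{\alpha\varphi/r} \equiv 1 \bmod p$ exactly when $(p-1) \mid \alpha\varphi/r$. Substituting $p-1 = rt$ and $\varphi/r = t(q-1)$, this becomes $rt \mid \alpha t(q-1)$, i.e. $r \mid \alpha(q-1)$. At this point the key-generation hypothesis $\gcd(r, q-1)=1$ lets me cancel the factor $(q-1)$, yielding $r \mid \alpha$. Chaining the equivalences gives $y^{\varphi/r} \equiv 1 \bmod n \iff r \mid \alpha$, and negating both sides produces precisely the statement that condition (\ref{buggy_y_condition}) is equivalent to $r \nmid \alpha$.

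I expect the only real subtlety to be the observation that the component modulo $q$ vanishes identically: it is tempting to treat $p$ and $q$ symmetrically, but the asymmetry built into the key generation ($r \mid p-1$ while $\gcd(r, q-1)=1$) is exactly what makes the statement clean. The remaining steps are routine divisibility manipulations, with the coprimality condition $\gcd(r, q-1)=1$ doing the essential cancellation.
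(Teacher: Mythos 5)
Your proof is correct, and its core is the same as the paper's: pass to the discrete logarithm $\alpha$ via the generator $g$ of $(\Z/p\Z)^*$, turn the condition $y^{\varphi/r}\equiv 1$ into a divisibility statement about $\alpha$, and cancel using the key-generation hypotheses. The organization differs in a way worth noting, and in your favor. The paper proves the two implications separately, and in the direction $r \mid \alpha \Rightarrow y^{\varphi/r} \equiv 1 \bmod n$ it writes $y^{\varphi/r} = g^{\alpha\varphi/r} = (g^{\alpha'})^{\varphi} = 1 \bmod n$, which is an abuse of notation: the relation $y = g^{\alpha}$ holds only modulo $p$, so this chain establishes the claim modulo $p$ but silently skips the component modulo $q$. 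Your CRT decomposition supplies exactly the missing piece --- the observation that $\varphi/r = \bigl((p-1)/r\bigr)(q-1)$ is a multiple of $q-1$, so the congruence modulo $q$ holds identically by Fermat --- and packages the whole lemma as one chain of equivalences rather than two implications. A second, smaller difference: in the cancellation step the paper reduces $\alpha\varphi/r \equiv 0 \bmod (p-1)$ modulo $r$ and invokes $\gcd(r,\varphi/r)=1$, which uses \emph{both} hypotheses $\gcd(r,(p-1)/r)=1$ and $\gcd(r,q-1)=1$; you instead substitute $p-1 = rt$, cancel the factor $t$ multiplicatively, and need only $\gcd(r,q-1)=1$. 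So your argument is not just cleaner but marginally more general: it shows the lemma does not depend on the hypothesis that $r$ and $(p-1)/r$ are coprime.
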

\begin{proof}
Assume that $r$ divides $\alpha$: $\alpha = r\alpha'$. So
\begin{eqnarray*}
  y^{\varphi/r} & = & g^{\alpha\varphi/r} \bmod n\\
  & = & (g^{\alpha'})^\varphi \bmod n \\
  & = & 1 \bmod n.
\end{eqnarray*}
Conversely, if $y^{\varphi/r}=1 \bmod n$, then
\begin{eqnarray*}
  g^{\alpha\varphi/r} & = & 1 \bmod n\\
  & = & 1 \bmod p\\
  \alpha\frac{\varphi}{r} & = & 0 \bmod{p-1}.
\end{eqnarray*}
Since $r$ divides $p-1$ and is coprime with $\frac{\varphi}{r}$ (by
definition), we have $r\mid \alpha$.
\end{proof}

Since picking $y\in (\Z_p)^*$ at random is the same when seen
$\bmod~p$ as picking $\alpha \in \{0, \ldots, p-2 \}$ at random, we
can therefore conclude that the proportion $\rho$ of faulty $y$'s is
exactly the proportion of non-invertible numbers $\bmod~r$ among the non-zero
$\bmod~r$. So 
$\rho = 1-\frac{\varphi(r)}{r-1}$. We notice
that this proportion depends on $r$ only, and it is non-zero when $r$ is not
a prime. Since decryption in Benaloh's scheme is essentially solving a
discrete logarithm in a subgroup of $\Z_p$ of order $r$, the original schemes
recommends to use $r$ as a product of small primes' powers, which tends to increase
$\rho$. In fact, denoting by $(p_i)$ the prime divisors of $r$
we have:
\[ \rho = 1-\frac{r}{r-1} \prod_i \frac{p_i-1}{p_i}  \approx 1-\prod_i \frac{p_i-1}{p_i} \]
which shows that the situation where decryption is easy also increases the proportion
of invalid $y$ when using the initial description of the encryption scheme.

As a practical example, assume we pick two $512$ bits primes $p$ and
$q$ as
\begin{eqnarray*}
p & = & 2\times (3\times{}5\times{}7\times{}11\times{}13) \times p' + 1\\
p' & = & 4464804505475390309548459872862419622870251688508955\verb+\+ \\
   &   & 5037374496982090456310601222033972275385171173585381\verb+\+ \\
   &   & 3914691524677018107022404660225439441679953592\\
q  & = & 1005585594745694782468051874865438459560952436544429\verb+\+ \\
   &   & 5033292671082791323022555160232601405723625177570767\verb+\+ \\
   &   & 523893639864538140315412108959927459825236754568279.\\
\end{eqnarray*}
Then
\begin{eqnarray*}
\gcd(q-1, p-1) & = & 2 \\
r & = & (3\times{}5\times{}7\times{}11\times{}13) \times p'\\
\rho & = &
1-\frac{r}{r-1}\times\frac{2}{3}\times\frac{4}{5}\times\frac{6}{7}\times\frac{10}{11}\times\frac{12}{13}\times\frac{p'}{p'-1}\\
\rho     & > & 61\%.
\end{eqnarray*}
This example was constructed quite easily: first we take $p'$ of
suitable size, and increase its value until $p$ is prime. Then we
generate random primes $q$ of suitable size until the condition
$\gcd(p-1,q-1)=2$ is verified; it took less than a second on a current
laptop using Sage~\cite{sage}.

Putting it all together, we can also characterize the faulty values of
$y$, together with the actual value $r'$ of the cleartext space size
(compared to the expected value $r$):
\begin{lmm}
Let $u = \gcd(\alpha, r)$. Then $r' = \frac{r}{u}$.
Moreover if $r' \neq r$, this faulty value of $y$ goes undetected by
the initial condition as long as $u\neq r$.
\end{lmm}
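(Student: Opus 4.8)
The plan is to characterize exactly which pairs of cleartexts produce overlapping encryption sets, and then count the resulting equivalence classes. Observe first that the encryptions of a message $m$ form the coset $y^m H$, where $H = \{u^r \bmod n : u \in (\Z_n)^*\}$ is the subgroup of $r$-th powers. Hence two messages $m_1$ and $m_2$ are indistinguishable precisely when $y^{m_1-m_2} \in H$, i.e. when $y^{m_1-m_2}$ is an $r$-th power $\bmod~n$. By the Chinese Remainder Theorem this happens iff $y^{m_1-m_2}$ is an $r$-th power both $\bmod~p$ and $\bmod~q$. The condition $\bmod~q$ is vacuous: since $\gcd(r, q-1)=1$ the $r$-th power map is a bijection on $(\Z_q)^*$, so every unit is an $r$-th power. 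Only the condition $\bmod~p$ carries information.

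Next I would reduce $\bmod~p$ exactly as in the proof of Theorem~\ref{benaloh_fixed}: writing $y = g^\alpha$ with $g$ a generator of $(\Z_p)^*$, the element $y^{m_1-m_2} = g^{\alpha(m_1-m_2)}$ is an $r$-th power iff $\alpha(m_1-m_2) \equiv 0 \bmod r$. Here I use that $r \mid p-1$, so the $r$-th power map on the cyclic group $(\Z_p)^*$ has kernel of size $\gcd(r,p-1)=r$ and image the subgroup of index $r$, consisting exactly of those $g^j$ with $r \mid j$. Setting $u = \gcd(\alpha, r)$ and writing $\alpha = u\alpha'$, $r = u r'$ with $\gcd(\alpha', r') = 1$, the congruence $\alpha(m_1-m_2) \equiv 0 \bmod r$ simplifies to $\alpha'(m_1-m_2) \equiv 0 \bmod r'$, and since $\alpha'$ is invertible $\bmod~r'$ this is equivalent to $m_1 \equiv m_2 \bmod r'$.

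I would then conclude the first claim by counting: two cleartexts in $\Z_r$ coincide iff they are congruent $\bmod~r'$, and since $r' \mid r$ the reduction $m \mapsto m \bmod r'$ is surjective onto $\Z_{r'}$ with fibers of constant size $u = r/r'$. Thus $m$ realizes exactly $r' = r/u$ distinct classes, which is the true size of the cleartext space. For the ``moreover'' part I would combine this with the preceding lemma, which states that condition~(\ref{buggy_y_condition}) is equivalent to $r \nmid \alpha$, i.e. to $u = \gcd(\alpha,r) \neq r$. An ambiguity occurs exactly when $r' \neq r$, that is $u > 1$; such a $y$ nonetheless passes the original test precisely when $u \neq r$, whereas $u = r$ forces $r' = 1$ and is caught by condition~(\ref{buggy_y_condition}). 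This is exactly the stated dichotomy, and it matches the counter-example of Section~\ref{sec:counterex}, where $u=3$ and $r'=5$.

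The main obstacle is the $r$-th power characterization $\bmod~p$: everything hinges on correctly identifying the image of the $r$-th power map in $(\Z_p)^*$ as the index-$r$ subgroup, and on the vacuity of the $\bmod~q$ condition that lets nonces absorb any discrepancy there. Once this group-theoretic fact is in place, the remaining gcd manipulation and the counting of residue classes are routine.
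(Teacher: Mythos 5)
Your proof is correct, and it is in fact more complete than the paper's own treatment, which consists of a single remark: the first point is obtained by ``extending'' the proof of the implication $(\ref{alpha_r_coprime}) \Rightarrow (\ref{dec_works})$ of Theorem~\ref{benaloh_fixed} (reduce the collision equation $y^{m_1}u_1^r = y^{m_2}u_2^r \bmod n$ modulo $p$, write $y=g^\alpha$, obtain $\alpha(m_1-m_2)\equiv 0 \bmod r$, then divide through by $u=\gcd(\alpha,r)$), and the second point is a rephrasing of the previous lemma, exactly as you argue. The genuine difference is that the argument the paper invokes yields only one inequality: a collision forces $m_1\equiv m_2 \bmod r'$, hence the cleartext space has \emph{at least} $r'$ distinguishable classes. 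The stated equality $r'=r/u$ also requires the converse --- that messages congruent $\bmod~r'$ really do have identical encryption sets --- and this is where your coset formulation ($E_r(m)=y^mH$ with $H$ the subgroup of $r$-th powers), the CRT splitting, and the observation that the $r$-th power map is a bijection on $(\Z_q)^*$ (so the $\bmod~q$ condition is vacuous) do real work; none of these ingredients appears in the proof of Theorem~\ref{benaloh_fixed} that the paper cites, which never needs to produce nonces realizing a collision. So your argument is the one that actually establishes the lemma as stated, while the paper's sketch, read literally, proves only the lower bound on the number of classes; the paper only exhibits the converse concretely in the counter-example of Section~\ref{sec:counterex} (where $y^5=41^{15}$ in $\Z_p$), and even there the lift from $\Z_p$ to $\Z_n$ tacitly uses the $\bmod~q$ vacuity you make explicit. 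Both proofs share the same computational core (the generator $\bmod~p$ and the gcd manipulation); yours buys a clean two-sided characterization at the modest cost of introducing the subgroup language, and your handling of the ``moreover'' part via $u\neq r$ matches the paper's use of the previous lemma exactly.
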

The proof of the first implication in Theorem~\ref{benaloh_fixed} is
easily extended to a proof of the first point of this lemma, while the
second point is a mere rephrasing of the previous lemma.

This result can be used to craft counter-examples as we did in
Section~\ref{sec:counterex}: for a valid value $y$ of the parameter
and $u$ a proper divisor of $r$, the value $y'=y^u\bmod n$ is an
undetected faulty value with actual cleartext space size $r'=r/u$. It
can also be used to determine precisely, for every proper divisor $r'$
of $r$ the probability of picking an undetected faulty parameter $y$
of actual cleartext space size $r'$. Such an extensive study was not
deemed necessary in the examples to follow in Section~\ref{sec:app}.

\section{Related Schemes}
\label{sec:rel}

We briefly discuss in this section some schemes related to that
of~\cite{Benaloh94}.

In~\cite{BT94}, the authors describe a cryptosystem which closely
resembles that of~\cite{Benaloh94}, but the conditions given on $r$ are
less strict. Let us recall briefly the parameters of the cryptosystem
as described in~\cite{BT94}:
\begin{itemize}
\item
  $r \mid p-1$ but $r^2 \nmid p-1$.
\item
  $r \nmid q-1$.
\item
  $y$ is coprime with $n$ and $y^{(p-1)(q-1)/r} \neq 1 \bmod n$.
\end{itemize}
It is clear that $r^2 \nmid p-1$ is weaker than $\gcd((p-1)/r, r) =
1$, and that $r\nmid q-1$ is weaker than $\gcd(q-1, r)=1$.
Therefore any set of parameters satisfying~\cite{Benaloh94} are also
valid parameters as defined in~\cite{BT94}.

Unfortunately the condition imposed on $y$ is the same
and still insufficient, and finding counter-examples is again a matter of
picking $\alpha$ not coprime with $r$. Our theorem still stands for
this cryptosystem if you replace condition~(\ref{y_good_order}) by the
following condition:
\begin{equation}
\mbox{For all prime factors $s$ of $r$, we have $y^{(p-1)/s} \neq 1 \bmod p$.}
\label{y_good_order2}
\end{equation}

Going back in time, the scheme of Goldwasser and
Micali~\cite{GoldwasserM82} can be seen as a precursor of~\cite{BT94}
with a fixed choice of $r=2$. The choice of $y$
in~\cite{GoldwasserM82} as a quadratic non-residue $\bmod n$ is
clearly an equivalent formulation of condition~(\ref{y_good_order2}).

Before~\cite{Benaloh94} and~\cite{BT94}, the scheme was defined by
Benaloh in~\cite{Benaloh87}, with the parameter $r$ being a prime. In
this case our condition~(\ref{y_good_order}) is the same as the one
proposed by Benaloh, and the scheme in this thesis is indeed
correct. The main difference between the different versions proposed
afterwards and this one is that it is not required for $r$ to be
prime, which leads in some cases to ambiguous ciphers. This remark
clearly shows that all details are important in cryptography and that
the problem we discover is subtle because even Benaloh himself did not
notice it.

Finally the scheme proposed by Naccache and Stern~\cite{NaccacheS98}
is quite close to the one proposed in~\cite{Benaloh87} but with a
parameterization of $p$ and $q$. It makes decryption correct,
efficient, and leaves the expansion factor as an explicit function of
the desired security level with respect to the $P-1$ method of
factoring~\cite{Pollard74} (the expansion is essentially the added
size of the big cofactors of $p-1$ and $q-1$). We note in passing that
a modulus size of 768 bits was considered secure at the time, a fact
disproved twelve years later~\cite{Kleinjung10factorizationof} only!

\section{Applications}
\label{sec:app}

In this last section, we present some applications which explicitly
use Benaloh's encryption scheme. We analyze in each situation what are
the consequence on the application of using a bad parameter produced
during the key generation.

\subsection{Receipt-free Elections}

In~\cite{BT94} the authors propose an application of homomorphic
encryption for designing new receipt-free secret-ballot
elections. They describe two protocols which use an homomorphic
encryption verifying a list of properties. They also give in Appendix
of the paper a precise description of a encryption scheme which
satisfies their properties. Its relation with~\cite{Benaloh94}
is given in section~\ref{sec:rel}.

 The new voting protocol uses the fact that the encryption is
 homomorphic and probabilistic. If we have two candidates Nicolas and
 Ségolène then the system associates for instance the ballot $0$ for
 Nicolas and the ballot $1$ for Ségolène. The main idea is that the
 server collects the $m$ authenticated encrypted ballots $\{v_i\}_k$
 corresponding to the choices $v_i$ of the $m$ voters. Hence the
 server performs the multiplication of all these votes and decrypts
 the product once to obtain the result. The number obtained
 corresponds to the number of votes for Ségolène $n_S$ and the
 difference $m-n_S$ gives the number of votes for Nicolas.

We construct a basic application of the first protocol proposed
in~\cite{BT94} and based on the example described in
Section~\ref{sec:counterex}. In this example we consider only $20$
voters. If the encryption is correctly done then the final result is
$\{14\}_k$. It means that after decryption Ségolène has $14$ votes and
Nicolas has $6$ votes. But if as we explain in
Section~\ref{sec:counterex} instead of computing the result $14 \bmod
15$ we are taking the result modulo $5$, then we obtain a result of
$14 \bmod 5 = 4$. This time Nicolas obtains $16$ votes and Ségolène
only $4$. This example clearly  shows that this bug in the condition in
the original paper can have important consequences.

\subsection{Private Multi-Party Trust Computation}

In~\cite{Trust10} the authors give a multiple private keys protocol
for private multi-party computation of a trust value: an initiating
user wants to know the (possibly weighted) average trust the network of
nodes has in some user. In a first phase of the protocol, each of the
$n$ nodes splits its trust $t$ in $n-1$ shares $(s_i)$ such that
\[ t = s_1 + s_2 + \ldots + s_{n-1} \bmod r. \]
Here $r$ is a common modulus chosen big enough with respect to the
maximum possible global trust value, and in order to insure the
privacy of its trust value the shares should be taken as random number
$\bmod~r$, except for the last one. The shares are then sent encrypted
(using Benaloh's scheme) to each other user, to be later
recombined. If we assume that one of the users has chosen a faulty
value for his public parameter $y$, then his contribution to the
recombined value will be computed $\bmod~r'$ instead of $\bmod~r$ for
some divisor $r'$ of $r$. As an extreme example, assume
\begin{itemize}
\item
 that the queried user is a newcomer, untrusted by anyone (hence the
 private value of $t$ for every node is $0$),
\item
  that the true recombined value contributed by the faulty user should
  have been $r-1$,
\item
  that $r'=r/3$.
\end{itemize}
Due to his miscalculation, the faulty node will contribute the value
$r'-1$ instead of $-1$, causing the apparent calculated trust value to
be quite high (about $1/3$ of the maximum possible trust value, instead of
$0$). This can have dramatic consequences if the trust value is used
later on to grant access to some resource. These assumptions are not
entirely unlikely: remember that $r=3^k$ is an explicitly suggested
choice of parameter of the cryptosystem (chosen for instance
in~\cite{PIR05}) in which case $\rho$ is close to $1/3$ and faulty
nodes occur with high probability even with moderate-sized
networks. We note also that the description from~\cite{Benaloh94} is
given \emph{in extenso}, with its incorrect condition. One reason for
choosing Benaloh's cryptosystem in this application is because the
cleartext space can be common among several private keys, a feature
unfortunately not achieved \emph{e.g.} by Paillier's
cryptosystem~\cite{Paillier:1999:PKC} but also possible with
Naccache-Stern's~\cite{NaccacheS98}.

\subsection{Secure Cards Dealing}

Another application of this encryption scheme is given in
\cite{Golle:2005}: securely dealing cards in poker (or similar
games). Here again the author gives the complete description of the
original scheme, with a choice of parameter $r=53$ (which is
prime). Because $r$ is prime, this application does not suffer from
the flaw explained here, but this choice of a prime number is done for
reasons purely internal to the cards dealing protocol, namely testing
the equality of dealt cards.

Given two ciphertext $E(m_1)$ and $E(m_2)$, the players need to test
if $m_1=m_2$ without revealing anything more about the cards $m_1$ and
$m_2$. The protocol is as follows:
\begin{enumerate}
\item
  Let $m=m_1-m_2$, each player can compute $E(m) = E(m_1)/E(m_2)$
  because of the homomorphic property of the encryption.
\item
  Each player $P_i$ secretly picks a value $0<\alpha_i<53$, computes
  $E(m)^{\alpha_i}$ and discloses it to everyone.
\item
  Each player can compute $\prod_i E(m)^{\alpha_i} = E(m)^\alpha$ with
    $\alpha=\sum_i \alpha_i$. The players jointly decrypt
    $E(m)^\alpha$ to get the value $m\alpha\bmod r$.
\end{enumerate}
Now because for each player the value of $\alpha$ is unknown and
random, if $m\alpha\neq 0 \bmod r$ then the players learn nothing
about $m$. Otherwise they conclude that the cards are equal.

We claim that this protocol fails to account for two problems:
\begin{itemize}
\item
  there is no guarantee that $\alpha \neq 0 \bmod r$. When this
  happens, two distinct cards will be incorrectly considered equal.
\item
  knowing the value of $E(m)$ and $E(m)^{\alpha_i}$, it is easy to
  recover $\alpha_i$ because of the small search space for
  $\alpha_i$. This means the protocol leaks information when $m_1\neq
  m_2$. The fix here is to multiply by some random encryption of $0$. 
\end{itemize}

\section{Conclusion}
\label{sec:ccl}

We have shown that the original definition of Benaloh's homomorphic
encryption does not give sufficient conditions in the choice of public
key to get an unambiguous encryption scheme. We gave a necessary and
sufficient condition which fixes the scheme. Our discussion on the
probability of choosing an incorrect public key shows that this
probability is non negligible for parameters where decryption is
efficient: for example using the suggested value of the form $r=3^k$,
this probability is already close to $1/3$.  We also explain on some
examples what can be the consequences of the use of the original
Benaloh scheme. In fact, it is surprising this result was not found
before, considering the number of applications built on the
homomorphic property of Benaloh's scheme. This strongly suggests this
scheme was rarely implemented.

\bibliographystyle{alpha}
\bibliography{biblio}

\end{document}